\ProvideExpandableDocumentCommand \class {} {IEEEtranPlus}
\IfClassLoadedWithOptionsTF \class {final} {\finaltrue\draftfalse} {}
\IfClassLoadedWithOptionsTF \class {ieee} {\ieeetrue} {}
\IfClassLoadedWithOptionsTF \class {arxiv} {\arxivtrue} {}
\IfClassLoadedWithOptionsTF \class {compress} {\compresstrue} {}
  \DeclareDocumentCommand{\orcidlink}{m}{}
\crefname{equation}{}{}
\Crefname{equation}{Equation}{Equations}
\title{%
  Function Computation and Identification over \\ Locally Homomorphic Multiple-Access Channels
}
\date{\today}
\ifblind \author{}\else
\author{
  \IEEEauthorblockN{%
    Johannes~Rosenberger,\IEEEauthorrefmark{1}
    Holger~Boche,\IEEEauthorrefmark{1}
    Juan~A.~Cabrera,\IEEEauthorrefmark{2}
    Christian~Deppe\IEEEauthorrefmark{3}
  }
  \IEEEauthorblockA{%
    \{\tumail{johannes.rosenberger}%
      , \tumail{boche}%
    \}@tum.de,
    \uref{mailto:juan.cabrera@tu-dresden.de}{juan.cabrera@tu-dresden.de},
    \uref{mailto:christian.deppe@tu-braunschweig.de}{christian.deppe@tu-braunschweig.de}
  }
  \IEEEauthorblockA{%
    \IEEEauthorrefmark{1}
    TUM School of Computation, Information and Technology,
    Technical University of Munich,
  }
  \IEEEauthorblockA{%
    \IEEEauthorrefmark{2}
    Faculty of Electrical and Computer Engineering
    and Centre for Tactile Internet with Human-in-the-Loop (CeTI), \\
    Technische Universität Dresden
  }
  \IEEEauthorblockA{%
    \IEEEauthorrefmark{3}
    Institute for Communications Technology,
    Technische Universität Braunschweig
  }
}
\NewDocumentCommand \tikzIdMacBipartiteRates {O{} m} {

%\pgfkeys{/pgf/fpu}
%\pgfkeys{/pgf/fpu/output format=fixed}
\tikzmath{
  real \ticklen, \pcrossover, \capBSC;
  \ticklen = 0.03; % length of axis ticks
  \pcrossover = #2; % we are noiseless, for now
  \capBSC = Ibsc(0.5, \pcrossover);
  %\capBSC = 1 - Hbin(\pcrossover);
}
%\pgfkeys{/pgf/fpu=false}

\begin{tikzpicture}[auto,#1]
  % axes
  \draw[->] (0,0) -- (0, {1+2*\ticklen}) node[above] {$R_2$};
  \draw[->] (0,0) -- ({1+2*\ticklen}, 0) node[right] {$R_1$};

  % axis marks
  \draw (0, 1) -- +(-\ticklen, 0) node[left] {$1$};
  \draw (1, 0) -- +(0, -\ticklen) node[below] {$1$};
  \draw (0, 0.5) -- (-\ticklen, 0.5) node[left] {$0.5$};
  \draw (0.5, 0) -- (0.5, -\ticklen) node[below] {$0.5$};

  % helper lines
  \draw[dotted, thin] (0, 0.5) -- (1, 0.5) (0.5, 0) -- (0.5, 1);
  %\draw[dotted, thin] (1, 0) -- (0, 1);
  \draw[helper/diagonal, name path=diagonal] (0, 0) -- (1, 1);

  % capacity regions
  %\draw[rateBound/identification] (0, 1) -- (1,1) node[below left] {DI} -- (1, 0);
  \draw[rateBound/di-mac, name path=di-mac]
    (0, 1) -- (1,1) node[rateBound/di-mac/node] {deterministic ID-MAC} -- (1, 0);
%\pgfkeys{/pgf/fpu}
%\pgfkeys{/pgf/fpu/output format=fixed}
    \draw[rateBound/transmission, name path=transmission]
    (0, \capBSC) -- (\capBSC, \capBSC) -- node[rateBound/transmission/node] {Transmission} 
    (\capBSC, 0);
%\pgfkeys{/pgf/fpu=false}

  \fill[rateBound/di,name intersections={of=diagonal and di-mac}]
    (intersection-1) circle (2pt) node[rateBound/di/node] {deterministic ID};

  \fill[rateBound/di-t,name intersections={of=diagonal and transmission}]
    (intersection-1) circle (2pt) node[rateBound/di-t/node] {ID-T};

\end{tikzpicture}
}
\colorlet{identification}{blue}
\colorlet{transmission}{red}
\tikzset{
  font = {\sffamily\footnotesize},
  helper/diagonal/.style = dotted,
  rateBound/di-t/.style = {color=blue},
  rateBound/transmission/.style = {dashed,color=red},
  rateBound/di-mac/.style = {},
  rateBound/di/.style = {color=blue},
}
\NewDocumentCommand{\namedef}{om}{\emph{#2}\IfValueT{#1}{~(#1)}}
\NewExpandableDocumentCommand \defeq {} \coloneqq
\NewExpandableDocumentCommand \union {} \bigcup
\NewExpandableDocumentCommand \intersection {} \bigcap
\NewExpandableDocumentCommand \setand {} \cap
\NewExpandableDocumentCommand \setor {} \cup
\DeclareMathOperator \BSC {BSC}
\DeclareMathOperator \Enc {Enc}
\DeclareMathOperator \Dec {Dec}
\DeclareMathOperator \id {id}
\NewExpandableDocumentCommand \RanVars {} \bbV
\NewExpandableDocumentCommand \Probs {} \bbP
\NewExpandableDocumentCommand \Nats {} \bbN
\NewExpandableDocumentCommand \Reals {} \bbR
\NewExpandableDocumentCommand \Ints {} \bbZ
\NewExpandableDocumentCommand \Field {} \bbF
\NewExpandableDocumentCommand \compose {} \circ
\NewExpandableDocumentCommand \xto {} \xrightarrow
\NewExpandableDocumentCommand \grel {} {\sim_}
\NewExpandableDocumentCommand \ampersand {} {&}
  \let \fxcolor \color
  \let \fxtextcolor \textcolor
  \NewExpandableDocumentCommand \fxcolor {m} {}
  \NewExpandableDocumentCommand \fxtextcolor {m} {}
\begin{document}

\maketitle

\ifonecol
  \input edit-notes
\fi

\begin{abstract}
  We develop the notion of a \acl{lhc}
  and prove an approximate equivalence between those and codes for computing
  functions. Further, we derive decomposition properties of \aclp{lhc}
  which we use to analyze and construct codes where two messages must be encoded
  independently. This leads to new results for \acl{id} and $K$-\acl{id}
  when all messages are sent over multiple-access channels,
  which yield surprising rate improvements compared to naive code constructions.
  In particular, we demonstrate that for the example of \acl{id}
  with deterministic encoders, both encoders can be constructed independently.
\end{abstract}

\section{Introduction}

The steady exponential growth of data usage in communication
poses a major problem for the design of communication networks: By Shannon's
famous capacity result~\cite{shannon1948it0} that pioneered the field
of information theory, data rates
are limited by the channel capacity, a fundamental property of the channel that
can only be increased, in the case of wireless and optical channels,
by increasing the bandwidth of the signal or the average transmission
power per symbol.
Thus, it is predicted that the generated data rates will exceed the
available channel capacities by around 2040~\cite{src2021decadalPlan},
as the capacity growth is limited by energy consumption
and electromagnetic compatibility.

Is there no escape from Shannon's limit? There is: He assumed that the sender
chooses a message from a finite set, encodes it, and the receiver's task is
to reproduce it exactly.
However, communication always has a context, and frequently, only
certain aspects of the message are important to the receiver.
Thus, one can model the communication goal as enabling the receiver to
compute a function of the context, the sender's, and the receiver's
knowledge, using a communication channel.
For example, consider a cyber-physical
system where Alice is an in-network controller accessing the state
$a \in \cA$ of
multiple sensors deployed in a factory. On the other hand,
Bob is a controlled plant that makes control decisions (i.e., compute a
function $f$) based on its own data $b \in \cB$, and Alice's messages.
This is considered in the paradigms of
semantic and goal-oriented communication~\cite{guenduez2023semantic}
to improve data compression,
often relying on
traditional rate-distortion theory~\cite{shannon1959rd_theory},
by using semantic and goal-oriented distortion functions and
side information
\cite{yamamoto1982wyner,alonOrlitsky1996sourceCodingGraph,orlitskyRoche2001coding_computing,doshi2010functional}.
Such so-called \emph{functional compression} has, e.g., been applied to
control problems \cite{rezwanCabreraFitzek2022funcomp}.
These more traditional approaches are used to compute a function multiple times
with an average distortion criterion, and they
enable compression rate improvements on an
exponential scale, except for functions with special symmetries like,
e.g., the modulo-two-sum of two numbers.

Similar ideas to exploit communication goals are the basis for Post-Shannon
theory~\cite{cabreraEA2021postShannon6G} and communication
complexity~\cite{yao1979comm_complexity,kushilevitzNisan1996comm_complexity}.
Pioneering work by Yao~\cite{yao1979comm_complexity}, JaJa~\cite{jaja1985identification}, and
Ahlswede and Dueck~\cite{ahlswedeDueck1989id1} found that for the task
of message \acf{id},
the efficiency is significantly improved by joint compression and error correction
\cite{jaja1985identification}, or by using stochastic
encoders~\cite{yao1979comm_complexity,ahlswedeDueck1989id1} to encode ambiguously.
In this setting, the sender and receiver select a message each, and the receiver
wants to test whether they are equal or not.
Randomized \ac{id} enables compression not only by some ratio, but to
logarithmic size, for a maximal error criterion.
Under a uniform averaging, code sizes are entirely unbounded
\cite{hanVerdu1992idNewResults}, even for deterministic (non-stochastic) codes
\cite{rosenbergerIbrahimDeppeFerrara2023di_mac_isit}.
Whereas \ac{id} is a very specific
task, its theory has been generalized
in part to selected other
problems~\cite{ahlswede2008gtit_updated} such as
$K$-\ac{id}, where the receiver infers if
the sent message equals one of $K$ many hypotheses.

We set out to find a more general characterization of similar communication
tasks where the use of randomness admits efficiency gains. To this end,
we took inspiration from describing functions as
graphs~\cite{witsenhausen1976characteristicGraph}.
We describe functions by hypergraphs that collect the maximal independent sets
of their characteristic graphs.
Since the task of a function computation code is to preserve the structure of
the function over a channel, we analyze homomorphic, i.e., structure-preserving,
maps between hypergraphs, instead of reasoning mainly about properties of the hypergraphs.
As noisy channels cannot be hypergraph homomorphisms,
we define \acfp{lhc} (\cref{sec:defs}). In \cref{sec:codesLhc}, we show an equivalence
between function codes and, and \acp{lhc}.
Further, we prove that for reliable computation of a function,
even the channel itself must be locally homomorphic. Hence, stochastic coding
can never help much, if encoders and decoders are unconstrained. When
messages are distributed among parties, e.g. one message is known to the
sender and one to the receiver, or both to different senders,
stochastic encoding can help indeed, and in \cref{sec:bipartite}, we
indicate how such bipartite encoders can be constructed from encoders
that assume the receiver knows the “other” message,
even if both messages are sent over a multiple-access channel.
For this setting, we construct \ac{id}-codes in \cref{sec:identification}.
We demonstrate our results at the example of deterministic \ac{id} over
a pair of binary symmetric channels, where the channel preserves the distance
of the channel inputs.
We compare the achievable rates and to simply using transmission
codes and computing the function at the channel output.
Finally, we discuss where tradeoffs between the rates for different partial
channels can be found.

\section{Definitions}
\label{sec:defs}

We use the following notation:
The
set of random variables over a finite set $\cX$ is denoted by
$\RanVars(\cX)$.
The indicator function $\ind{\cdot}$ evaluates to $1$
if its argument is true, and to $0$ if it is false.
For values $x, a$, we define $x \otimes a \coloneqq (x,a)$,
and for functions $f : \cX \to \cY$ and $g : \cA \to \cB$,
the function $f \otimes g : \cX \times \cA \to \cY \times \cB$ is defined by
$(f \otimes g) (x,a) \coloneqq (f(x), f(a))$.
We denote the image of $\cS \subseteq \cX$ under $f$ by
$f(\cS) \coloneqq \set{ f(x) : x \in \cS }$.

\subsection{Channels and codes}

\begin{definition}[(discrete memoryless) channel]
  A \namedef{channel} $\phi : \cX \to \RanVars(\cY)$ is a function that
  maps an input symbol $x \in \cX$ to a random variable $Y = \phi(x)$.
  A \namedef{discrete channel} has finite $\cX, \cY$.
  A \namedef{memoryless channel $\phi$} is a family $\set{ \phi^n }_{n \in \Nats}$,
  where $\phi$ is applied independently to each letter of a sequence $x^n \in
  \cX^n$, i.e., $\phi^n(x^n) = \brack{ \phi(x_1),\dots,\phi(x_n) }$.
\end{definition}

If the goal of communication is to compute a function of a pair of arguments
selected by sender and receiver, we equivalently want to transform the channel
into the function, with high reliability.
A code is used to translate between structures that the users understand
and a structure that is preserved with high probability by the channel.

\begin{definition}
  A \namedef{function code} $(\Enc, \Dec)$ for computing
  a function $f : \cA \to \cB$
  over a channel $\phi : \cX \to \RanVars(\cY)$
  consists of an encoder $\Enc : \cA \to \RanVars(\cX)$
  and a decoder $\Dec : \cY \to \RanVars(\cB)$.
  It is an \namedef{$(f, \phi, \bm\lambda)$-code}, where $\bm\lambda \in
  [0,1]^{\card{f(\cA)}}$, if for all $a \in \cA$,
  \begin{gather}
    \Pr\set{ f(a) = \Dec \compose\, \phi \compose \Enc \tup*a } \ge 1 - \lambda_{f(a)}
    \,.
  \end{gather}
\end{definition}

Thus, for any input $a$, the channel
$\Dec \compose\, \phi \compose \Enc$
preserves the structure of the function $f$,
with probability $\ge 1 - \lambda_{f(a)}$.
We can express more traditional communication codes as function codes. To this
end, let $M \in \bbN$ and $\cM \coloneqq \set{1,\dots,M}$.

\begin{definition}
  An
  $(M, \phi, \bm\lambda)$-\namedef{transmission-code}, also called a \namedef{channel code}
  is an $(\id_\cM, \phi, \bm\lambda)$-code, where $\id_\cM : m \mapsto m$ is the identity function on $\cM$.
  It must exactly reproduce any $m \in \cM$ at the receiver's side, with probability $\ge 1 - \lambda_m$.
\end{definition}

\begin{definition}
  To define $K$-\ac{id}-codes, let $\cM^K \coloneqq \set{ \cS \subseteq \cM : \card\cS = K }$.
  An $(M, K, \phi, \bm\lambda)$-\namedef{\acl{id}-code}
  is an $(f_{K\textsf{-ID}}, \phi, \bm\lambda)$-code where
  $f_{K\textsf{-ID}} : \cM \times \cM^K \to \set{ 0, 1 },\, (m, \cS) \mapsto \ind{ m \in \cS }$.
  An $(M, \phi, \bm\lambda)$-\namedef{\ac{id}-code} is an $(M, 1, \phi, \bm\lambda)$-\ac{id}-code.
  Equivalently, we can express \ac{id} by the function
  $f_{\textsf{ID}} : \cM^2 \to \set{0,1},\, (m, m') \mapsto \ind{ m = m' } \equiv
  f_{1\textsf{-ID}}$.
\end{definition}

\subsection{Modeling functions as characteristic hypergraphs}

To compute a function over a channel, we must preserve its structure by means of coding.
If the structure of $f$ is expressed by a hypergraph, i.e., a set equipped with a
set of subsets, then a structure-preserving map is a hypergraph
\namedef{homomorphism}. To incorporate noise and random resources
in our model, we define \namedef{\aclp{lhc}}.

\begin{definition}
  A \namedef{hypergraph} $G = (\cV, \cE)$ consists of a vertex
  set $\cV(G) \coloneqq \cV$ and an edge set $\cE(G) \coloneqq \cE$,
  where every edge $e \in \cE$ is a subset of $\cV$.
  In the \namedef{complete 1-uniform hypergraph}
  $\cV^1 := \tup{\cV, \set{ \set{v} : v \in \cV}}$,
  all vertices are disconnected.
  If the edges partition the vertex set, $G$ is called
  a \namedef{partition hypergraph}.
  A \namedef{graph} is a 2-uniform hypergraph ($\cE \subseteq \cV^2$).
\end{definition}

\begin{definition}
  The \namedef{characteristic hypergraph} $H_f = (\cA, \cE_f)$ of a function
  $f : \cA \to \cB$ has $\cE_f = \set{ \set{ a : f(a) = b } : b \in f(\cA)}$.
\end{definition}

\begin{definition}
  A \namedef{homomorphism} $f : G \to H$ from a hypergraph $G$ to a hypergraph $H$
  is a function $f : \cV(G) \to \cV(H)$ that preserves edges,
  i.e., there exists a function $f_\cE : \cE(G) \to \cE(H)$ such that
  for all $e \in \cE(G)$, $f(e) \subseteq f_\cE(e)$. The function $f_\cE$ is
  called the edge map of $f$.
  If $f_\cE$ is sur-/bijective, $f$ is called
  \namedef{edge-sur/-bijective}, respectively.
\end{definition}
\begin{remark}
  Note that every function $f : \cA \to \cC$ is an edge-bijective homomorphism $f : H_f \to \cC^1$.
  Further, if $f$ is a homomorphism $G \to H$, then
  all intersecting pairs of edges $e, e' \in \cE(G)$ have
  $f(e \cap e') \subseteq f_\cE(e) \cap f_\cE(e')$ and $f(e \cup e') \subseteq
  f_\cE(e) \cup f_\cE(e')$.
  Edge-surjectivity of $f$ means equivalently that $f_\cE(\cE(G)) = \cE(H)$,
  and edge-injectivity means that $f_\cE^{-1}(\cE(H)) = \cE(G)$.
  Hence, if $f$ is edge-surjective and
  a pair $f_\cE(e) \cap f_\cE(e') = \emptyset$, then $e \cap e' = \emptyset$.
  For example, for all $c,c' \in \cE(\cC^1)$, we have that $c = c'$ or
  $c \cap c' = \emptyset$, and thus for all $e, e' \in \cE(H_f)$, $e = e'$
  or $e \cap e' = \emptyset$ .
\end{remark}

\begin{definition}[\acf{lhc}]
  \label{def:lhom}
  Let $G$ and $H$ be hypergraphs and $\bm\lambda \in [0,1]^{\card{\cE(G)}}$.
  A channel $\phi : \cV(G) \to \RanVars(\cV(H))$ is called a
  \namedef{\acl{lhc}} $\phi : G \xto{\bm\lambda} H$
  if there exists a function $f_\cE : \cE(G) \to \cE(H)$,
  called the edge map of $\phi$,
  such that for every $a \in A \in \cE(G)$,
  \begin{gather}
    \label{eq:def.lhom}
    \Pr\set[\Big]{ \phi(a) \in \bigcap_{\substack{A \in \cE(G) \\ a \in A}} f_\cE(A) }
    \ge 1 - \min_{\substack{A \in \cE(G) \\ a \in A}} \lambda_A
    \,.
  \end{gather}
  If $f_\cE$ is sur-/bijective, $\phi$ is
  \namedef{edge-sur-/bijective}, respectively.
\end{definition}
\begin{remark}
  \label{remark:def.lhom.partitionHypergraph}
  If $G$ is a partition hypergraph, we denote by $G_a$
  the unique edge of $G$ containing $a \in \cV(G)$.
  Thus, \cref{eq:def.lhom} simplifies to
  \begin{gather}
    \label{eq:def.lhom.partitionHypergraph}
    \Pr\set{ \phi(a) \in f_\cE(G_a) }
    \ge 1 - \lambda_{G_a}
    \,.
  \end{gather}
\end{remark}

\section{Codes and \aclp{lhc}}
\label{sec:codesLhc}

For functions with only one argument, known to the sender,
the most communication-efficient way to
compute them over a channel is to simply send the function value.
However, for functions $f : \cA \times \cB \to \cC$
where both arguments are known only to different parties
which can only communicate over a channel, this is not possible.
In the following, we show that function codes give rise to
edge-bijective \ac{lhc} between partition hypergraphs, and vice versa.
Further, we examine decomposition properties which we use
in \cref{sec:bipartite} to analyze and construct bipartite encoders.
This enables us to obtain new results for ($K$)-\ac{id}
in \cref{sec:identification}.

\begin{lemma}
  \label{lemma:edgeMapHasHom}
  Any $f_\cE : \cE(G) \to \cE(H)$, where $G, H$ are
  partition hypergraphs,
  is the edge map of a homomorphism $f : G \to H$.
\end{lemma}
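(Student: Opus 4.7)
The plan is to construct the vertex map $f$ by pointwise selection, exploiting the defining property of partition hypergraphs that every vertex lies in exactly one edge. Concretely, for each $a \in \cV(G)$ let $G_a$ denote the unique edge of $G$ containing $a$ (as in \cref{remark:def.lhom.partitionHypergraph}). Since edges of a partition hypergraph are nonempty, $f_\cE(G_a) \in \cE(H)$ is a nonempty subset of $\cV(H)$, so we may choose an arbitrary vertex $f(a) \in f_\cE(G_a)$; this defines a function $f : \cV(G) \to \cV(H)$.

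It remains to verify that $f$ is a homomorphism with edge map $f_\cE$, i.e., that $f(A) \subseteq f_\cE(A)$ for every $A \in \cE(G)$. Fix $A \in \cE(G)$ and $a \in A$. Because $G$ is a partition hypergraph, $A$ is the only edge of $G$ containing $a$, so $A = G_a$. By the construction, $f(a) \in f_\cE(G_a) = f_\cE(A)$. Ranging over $a \in A$ yields $f(A) \subseteq f_\cE(A)$, as required.

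I do not expect any real obstacle here: the statement is essentially a pointwise choice argument, and partitioning of $\cV(G)$ ensures that there is no conflict between different edges constraining the same vertex $a$ — a situation that would arise in a general hypergraph, where $a$ could lie in several edges $A, A'$ and force $f(a) \in f_\cE(A) \cap f_\cE(A')$, possibly empty. The only mild point worth noting is that the construction uses a choice of representative in each $f_\cE(A)$, which is unproblematic for finite or even arbitrary hypergraphs via the axiom of choice.
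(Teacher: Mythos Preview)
Your proof is correct and follows essentially the same approach as the paper: define $f(a)$ by choosing an arbitrary element of $f_\cE(G_a)$, where $G_a$ is the unique edge containing $a$, and conclude $f(e) \subseteq f_\cE(e)$ for every edge $e$. Your version merely spells out the verification and the nonemptiness/choice remark in more detail.
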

\begin{proof}
  For every $a \in \cV(G)$,
  let $G_a \in \cE(G)$ be the unique edge
  containing $a$, and select arbitrarily an image $f(a) \in f_\cE(G_a)$.
  Thus, for every edge $e \in \cE(G)$, $f(e) \subseteq f_\cE(e)$.
\end{proof}

\begin{lemma}
  \label{lemma:homBijectiveEquiv}
  If $\phi, \eta : H \xto{\bm\lambda} G$ are \acp{lhc} and $\eta$ is edge-bijective,
  there exists an edge-bijective homomorphism $g : G \to G$ such that
  for every $E \in \cE(H)$, $g \compose \eta(E) = \phi(E)$.
\end{lemma}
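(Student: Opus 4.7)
The plan is to construct $g$ by pulling $\phi$'s edge assignment back through $\eta$'s edge map, which is invertible by hypothesis, and then to lift the resulting edge map on $G$ to a vertex-level homomorphism using the preceding lemma. Throughout I interpret ``$g \compose \eta(E) = \phi(E)$'' at the edge level, i.e., as the statement $g_\cE \compose f_\cE^\eta = f_\cE^\phi$ on $\cE(H)$.

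First, let $f_\cE^\phi, f_\cE^\eta : \cE(H) \to \cE(G)$ denote edge maps witnessing that $\phi$ and $\eta$ are \acp{lhc}. Edge-bijectivity of $\eta$ means $f_\cE^\eta$ is a bijection, so $(f_\cE^\eta)^{-1} : \cE(G) \to \cE(H)$ is well-defined on all of $\cE(G)$.

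Second, define the candidate edge map
\begin{gather*}
  g_\cE \defeq f_\cE^\phi \compose (f_\cE^\eta)^{-1} : \cE(G) \to \cE(G).
\end{gather*}
By construction $g_\cE \compose f_\cE^\eta = f_\cE^\phi$, so for every $E \in \cE(H)$ the edge $f_\cE^\eta(E)$ is sent by $g_\cE$ to exactly $f_\cE^\phi(E)$, which is the required identity. Then I invoke \cref{lemma:edgeMapHasHom} to realize $g_\cE$ as the edge map of a genuine hypergraph homomorphism $g : G \to G$; this step uses that $G$ is a partition hypergraph, which is the setting in which the function-code/\acs{lhc} correspondence is being developed in \cref{sec:codesLhc}. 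Once $g$ is in hand, the required equality $g \compose \eta(E) = \phi(E)$ follows from $g(f_\cE^\eta(E)) \subseteq g_\cE(f_\cE^\eta(E)) = f_\cE^\phi(E)$ together with the reverse inclusion, which holds because we can always pick the vertex-level lifts of $g$ to hit every vertex in the target edge (the freedom left by \cref{lemma:edgeMapHasHom}).

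The main obstacle is establishing that $g$ is \emph{edge-bijective}, not merely a homomorphism. Since $g_\cE$ is the composition of $(f_\cE^\eta)^{-1}$ (bijective by hypothesis) with $f_\cE^\phi$, bijectivity of $g_\cE$ is equivalent to bijectivity of $f_\cE^\phi$. This is the delicate point: one must argue that whenever two \acp{lhc} $\phi,\eta$ share the same partition hypergraph $H$ as their source and $G$ as their target, with $\eta$ edge-bijective, the edge map of $\phi$ is also forced to be bijective. I expect this to follow from a counting / support argument: the edges of $G$ partition $\cV(G)$, and the edge-bijectivity of $\eta$ already pairs off $\cE(H)$ with $\cE(G)$; any non-injectivity or non-surjectivity of $f_\cE^\phi$ would leave some edge of $G$ either uncovered or doubly-covered by the supports of $\phi$ on $\cV(H)$, contradicting the partition-hypergraph structure. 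Making this precise---possibly by reading edge-bijectivity of $\phi$ as an implicit hypothesis inherited from the context of application---is the only non-routine step.
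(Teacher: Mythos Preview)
Your core construction is exactly the paper's: it lets $f_\cE, h_\cE$ be the edge maps of $\phi, \eta$, sets $g_\cE \coloneqq f_\cE \compose h_\cE^{-1}$, invokes \cref{lemma:edgeMapHasHom} to lift $g_\cE$ to a homomorphism $g : G \to G$, and observes that $g \compose \eta$ then has edge map $g_\cE \compose h_\cE = f_\cE$, matching $\phi$'s. That is the entire proof in the paper; it does not argue edge-bijectivity of $g$ at all.

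You are right to flag edge-bijectivity of $g$ as an issue, but your proposed counting/support argument cannot work. Nothing in the definition of an \ac{lhc} forces $f_\cE^\phi$ to be bijective: take $H = G$ a two-edge partition hypergraph, $\eta = \id$, and let $\phi$ be the constant channel sending every vertex to a fixed $v \in \cV(G)$. Then $\phi$ is an \ac{lhc} (even with $\bm\lambda = 0$) whose edge map collapses both edges of $H$ onto the single edge containing $v$, so $g_\cE$ is not injective and no edge-bijective $g$ can exist. The partition structure of $G$ is not violated anywhere; the support of $\phi$ simply need not cover $\cV(G)$. Thus ``edge-bijective'' in the conclusion is not derivable from the stated hypotheses and appears to be a slip. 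In the only place the lemma is invoked (\cref{lemma:fcodeMakesLhomChannel}\ref{lemma:fcodeMakesLhomChannel.homToCode}), the second \ac{lhc} is also edge-bijective by assumption, so $g_\cE$ is automatically a bijection there. The honest fix is to add edge-bijectivity of $\phi$ to the hypotheses (or drop ``edge-bijective'' from the conclusion), not to try to manufacture it from thin air.

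A minor point: the business about choosing vertex-level lifts of $g$ to ``hit every vertex in the target edge'' is unnecessary. The conclusion is read as an edge-map identity $g_\cE \compose h_\cE = f_\cE$, not a vertex-set equality, and the paper treats it that way.
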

\begin{proof}
  Let $f_\cE, h_\cE$ be the edge maps of $\phi, \eta$, respectively.
  By bijectivity of $h_\cE$, we can define the function $g_\cE = f_\cE \compose h_\cE^{-1}$.
  By \cref{lemma:edgeMapHasHom}, there exists $g$ with edge map $g_\cE$,
  and hence, the edge map of $g \compose \eta$ is given by
  $g_\cE \compose h_\cE = f_\cE$.
\end{proof}

\begin{theorem}
  \label{lemma:fcodeMakesLhomChannel}
  For a function $f : \cA \to \cB$, the following holds:
  \begin{subthms}
    \subthm \label{lemma:fcodeMakesLhomChannel.homToCode}
      If $\Dec \compose\, \phi \compose \Enc : H_f \to f(\cA)^1$ is an
      edge-bijective \ac{lhc},
      there exists an edge-bijective homomorphism $g : f(\cA)^1 \to f(\cA)^1$
      such that $(\Enc, g \compose \Dec)$ is an $(f, \phi, \bm\lambda)$-code.
    \subthm \label{lemma:fcodeMakesLhomChannel.codeToHom}
      If $(\Enc, \Dec)$ is an $(f, \phi, \bm\lambda)$-code, then
      $\Dec \compose\, \phi \compose \Enc : H_f \to f(\cA)^1$ is an
      edge-bijective \ac{lhc}
  \end{subthms}
\end{theorem}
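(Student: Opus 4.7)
The plan is to handle the two directions separately, exploiting that both $H_f$ and $f(\cA)^1$ are partition hypergraphs, so the simplified \ac{lhc} condition of \cref{remark:def.lhom.partitionHypergraph} applies throughout. For \cref{lemma:fcodeMakesLhomChannel.codeToHom}, I would simply exhibit an edge map and verify the inequality: take $f_\cE : \cE(H_f) \to \cE(f(\cA)^1)$ sending $\set{a' : f(a') = b}$ to $\set{b}$, which is bijective by construction. For any $a \in \cA$, the unique edge of $H_f$ containing $a$ is $G_a = \set{a' : f(a') = f(a)}$, with image $f_\cE(G_a) = \set{f(a)}$, so the code's defining inequality becomes exactly \cref{eq:def.lhom.partitionHypergraph} for this edge map, once $\lambda_{G_a}$ is identified with $\lambda_{f(a)}$ via the natural indexing of $\bm\lambda$ by edges of $H_f$. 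This direction is essentially a rewriting of definitions.

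For \cref{lemma:fcodeMakesLhomChannel.homToCode}, the obstacle is that the edge map $h_\cE$ of the composed channel $\psi \defeq \Dec \compose\, \phi \compose \Enc$ need not be the ``correct'' $f_\cE$ above; it is only guaranteed to be \emph{some} bijection $\cE(H_f) \to \cE(f(\cA)^1)$, amounting to an arbitrary relabelling of output symbols. The idea is to cancel this relabelling by post-composition with a deterministic permutation on $f(\cA)$. I would apply \cref{lemma:homBijectiveEquiv}, letting the edge-bijective \ac{lhc} there be $\psi$ and the other \ac{lhc} be $f$ itself viewed as the deterministic channel $a \mapsto f(a)$; this $f$ is an edge-bijective \ac{lhc} $H_f \to f(\cA)^1$ with edge map $f_\cE$ and tolerates \emph{any} $\bm\lambda$ since the relevant event has probability one. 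The lemma then produces an edge-bijective homomorphism $g : f(\cA)^1 \to f(\cA)^1$ whose edge map satisfies $g_\cE \compose h_\cE = f_\cE$, making $g \compose \psi$ an \ac{lhc} with edge map $f_\cE$ and the original tolerance vector $\bm\lambda$.

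The only step that requires care is converting this edge-map equality back into the per-input probability bound that a function code demands. It works because every edge of $f(\cA)^1$ is a singleton: applying \cref{remark:def.lhom.partitionHypergraph} to $g \compose \psi$ with edge map $f_\cE$ yields
\[
  \Pr\set{ g \compose \psi(a) \in f_\cE(G_a) } \ge 1 - \lambda_{G_a},
\]
and $f_\cE(G_a) = \set{f(a)}$, so this is precisely the defining inequality of an $(f, \phi, \bm\lambda)$-code for the pair $(\Enc, g \compose \Dec)$. The main obstacle is thus conceptual rather than technical: recognising that edge-bijectivity of the composed channel captures exactly the ambiguity of output labelling, which can always be undone by a deterministic bijection on $f(\cA)$.
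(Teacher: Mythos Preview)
Your proposal is correct and follows essentially the same approach as the paper's own proof: for \cref{lemma:fcodeMakesLhomChannel.codeToHom} you exhibit the canonical edge map $f_\cE$ induced by $f$ and observe that the code inequality is literally \cref{eq:def.lhom.partitionHypergraph}, while for \cref{lemma:fcodeMakesLhomChannel.homToCode} you invoke \cref{lemma:homBijectiveEquiv} with $\eta = \psi$ and the deterministic channel $f$ in the role of the lemma's $\phi$ to produce the correcting permutation $g$. The paper argues identically but more tersely, simply noting that $f$ is an edge-bijective homomorphism $H_f \to f(\cA)^1$ and citing \cref{lemma:homBijectiveEquiv}; your version merely spells out the roles of the two \acp{lhc} in that lemma and the back-translation via singleton edges.
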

\begin{proof}
  Let $\psi = \Dec \compose\, \phi \compose \Enc$.
  By the definitions of either a function code or \iac{lhc},
  \begin{align}
    1 - \lambda_{f(a)}
      &\le \Pr\set{ \psi(a) = f(a) }
    \\&= \Pr\set{ \psi(a) \in f_\cE(H_{f,a}) }
    \,,
  \end{align}
  and $f$ is an edge-bijective homomorphism $H_f \to f(\cA)^1$.
  Hence, \cref{lemma:fcodeMakesLhomChannel.homToCode} holds by \cref{lemma:homBijectiveEquiv},
  and \cref{lemma:fcodeMakesLhomChannel.codeToHom} holds
  because $f_\cE$ is bijective, and thus $\psi$ is also
  edge-bijective.
\end{proof}

\subsection{Decomposing \aclp{lhc}}

\begin{lemma}
  \label{lemma:detPrefix_lhomSuffix}
  Consider any hypergraphs $F, G, H, I$,
  edge-bijective homomorphisms $f : F \to G$, $h : H \to I$,
  and channels $\gamma : \cV(G) \to \RanVars(\cV(H))$.
  $h \compose \gamma \compose f : F \xto{\bm\lambda} I$
  is \iac{lhc} with edge map $e_\cE$
  if and only if $\gamma : G \xto{\bm\lambda} H$
  is locally homomorphic
  with edge map $g_\cE = h_\cE^{-1} \compose e_\cE \compose f_\cE^{-1}$.
\end{lemma}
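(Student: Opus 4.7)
The plan is to unpack both directions of the equivalence directly from the \ac{lhc} definition, using bijectivity of $f_\cE$ and $h_\cE$ to rewrite the hypothesized relation as the algebraic identity $e_\cE = h_\cE \compose g_\cE \compose f_\cE$. Both directions then amount to translating \cref{eq:def.lhom} through this composition and matching the $\bm\lambda$-indices between $\cE(F)$ and $\cE(G)$ via the bijection $f_\cE$.

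For the forward direction I would fix an arbitrary $a \in A \in \cE(F)$ and observe that the homomorphism property of $f$ places $f(a) \in f_\cE(A) \in \cE(G)$, so the \ac{lhc} condition of $\gamma$ applies at $f(a)$ and yields
\[
  \Pr\set[\Big]{ \gamma(f(a)) \in \bigcap_{B \ni f(a),\, B \in \cE(G)} g_\cE(B) } \ge 1 - \min_{B \ni f(a)} \lambda_B.
\]
Restricting the intersection to the sub-collection $\set{ f_\cE(A') : a \in A' \in \cE(F) }$ is valid because $a \in A'$ forces $f(a) \in f_\cE(A')$, and only strengthens the bound since a minimum over a subset is no smaller. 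Pushing $h$ through pointwise and using $h(E) \subseteq h_\cE(E)$ together with the algebraic identity, the event becomes $h(\gamma(f(a))) \in \bigcap_{A' \ni a} e_\cE(A')$, which is exactly the \ac{lhc} condition of $h \compose \gamma \compose f$ at $a$.

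For the converse I would run the same chain in reverse: the \ac{lhc} condition on $h \compose \gamma \compose f$ at $a$ gives a high-probability event for $h(\gamma(f(a)))$, and bijectivity of $h_\cE$ lets me lift membership of $h(\gamma(b))$ in the image edge $e_\cE(A') = h_\cE(g_\cE(f_\cE(A')))$ back to membership of $\gamma(b)$ in $g_\cE(f_\cE(A'))$ at the edge level. Intersecting over all edges of $G$ containing $b$, which correspond bijectively via $f_\cE$ to edges of $F$ containing the corresponding point, and transporting the $\lambda$-bound back through $f_\cE$, yields the required condition on $\gamma$.

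The main obstacle I anticipate is the lifting step in the converse: edge-bijectivity of $h$ as a homomorphism does not automatically imply the pointwise implication $h(y) \in h_\cE(E) \Rightarrow y \in E$, since $h$ is not required to be vertex-injective. The converse argument must therefore invert through $h$ at the edge level rather than vertex-by-vertex, exploiting the bijectivity of $h_\cE$ as a correspondence $\cE(H) \leftrightarrow \cE(I)$ together with the fact that \cref{eq:def.lhom} involves the full intersection over all edges containing the point in question to pin $\gamma(b)$ into the correct $H$-edge.
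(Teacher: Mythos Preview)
The paper does not split into two directions; it writes a single chain of equalities
\[
  \Pr\set[\Big]{ h\circ\gamma\circ f(a) \in \bigcap_{A\ni a} e_\cE(A) }
  = \Pr\set[\Big]{ \gamma\circ f(a) \in \bigcap_{A\ni a} g_\cE\circ f_\cE(A) }
  = \Pr\set[\Big]{ \gamma(b) \in \bigcap_{B\ni b} g_\cE(B) }
\]
for $b=f(a)$, and reads off both implications at once. The two equalities are justified there by the pointwise equivalences $a\in A \Leftrightarrow f(a)\in f_\cE(A)$ and $c\in C \Leftrightarrow h(c)\in h_\cE(C)$, which the paper asserts as consequences of edge-bijectivity of $f$ and $h$.

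Your flagged obstacle is therefore exactly the step the paper takes without further argument. You are right that for arbitrary hypergraphs edge-bijectivity of a homomorphism does \emph{not} give $h(y)\in h_\cE(C)\Rightarrow y\in C$; one can have $f(a)\in f_\cE(A)$ with $a\notin A$ whenever two image edges overlap. The implication \emph{does} hold when the target hypergraph is a partition hypergraph and the edge map is injective (since then distinct image edges are disjoint), and every later use of \cref{lemma:detPrefix_lhomSuffix} in the paper is in that regime. Your forward direction is fine as written; your proposed ``edge-level'' workaround for the converse does not actually escape the obstruction in full generality, because pinning $\gamma(b)$ into the right $H$-edge from information about $h(\gamma(b))$ still needs disjointness of the image edges. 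The honest resolution is to read the lemma under the partition-hypergraph hypothesis that governs its applications, at which point your argument and the paper's coincide.
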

\begin{proof}
  For all $a,b$ that satisfy $b = f(a)$,
  \begin{align}
    &1 - \min\nolimits_{A \in \cE(F),\, a \in A} \lambda_A
    \nonumber\\
      &\le \Pr\set{ \forall\, A \in \cE(F), a \in A : h \compose \gamma \compose f (a) \in e_\cE (A) }
    \\
      &=   \Pr\set{ \forall\, A \in \cE(F), a \in A : \gamma \compose f (a) \in g_\cE \compose f_\cE (A) }
    \\
      &= \Pr\set{ \forall\, B \in \cE(G), b \in B : \gamma(b) \in g_\cE(B) }
    \,,
  \end{align}
  where the first equality follows from edge-bijectivity of $h$
  and the second equality follows from edge-bijectivity of $f$, i.e.,
  for all $A \in \cE(F)$ and $C \in \cE(H)$,
  $a \in A \Leftrightarrow f(a) \in f_\cE(A)$ and
  $c \in C \Leftrightarrow h(c) \in h_\cE(C)$.
\end{proof}

\begin{theorem}
  \label{lemma:lhomChannel.decompose}
  Consider partition hypergraphs $H, F$
  and channels $\phi : \cA \to \RanVars(\cB)$ and $\gamma : \cB \to \RanVars(\cC)$ such that
  $\eta = \gamma \circ \phi : H \xto{\bm\lambda} F$ is an edge-bijective \ac{lhc}
  with edge map $e_\cE$, and $\lambda_A < 1/2$ for every $A \in \cE(H)$.
  Then, $\phi : H \xto{\bm\mu} G$ and $\gamma : G \xto{\bm\kappa} F$
  are edge-bijective \acp{lhc} whenever all of the following holds:
  \begin{subthms}
    \item $\cV(G) = \bigcup_{B \in \cE(G)} B$,
    \item $\cE(G) = \set{ B_A : A \in \cE(H) })$,
    \item $B_A = \set{ b : \Pr(\gamma(b) \in e_\cE(A)) > 1 - \kappa_{B_A} }$,
    \item $\lambda_A/\mu_A \le \kappa_{B_A} \le \frac 1 2$.
  \end{subthms}
\end{theorem}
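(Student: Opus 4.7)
The plan is to verify the two \ac{lhc} claims in sequence. For $\gamma$, the bound is essentially immediate from the definition of $B_A$ in~(c); for $\phi$, it requires a splitting argument against $B_A$ that consumes the hypothesis $\kappa_{B_A} \ge \lambda_A/\mu_A$. Before either, I would confirm that the candidate edge maps $f_\cE : A \mapsto B_A$ and $g_\cE : B_A \mapsto e_\cE(A)$ are bijective and that $G$ is a partition hypergraph, so that \cref{remark:def.lhom.partitionHypergraph} applies. Suppose $b \in B_A \cap B_{A'}$ with $A \ne A'$. By~(c), $\Pr\set{ \gamma(b) \in e_\cE(A) } > 1 - \kappa_{B_A} \ge \tfrac12$ and likewise for $A'$; but $e_\cE(A)$ and $e_\cE(A')$ are disjoint edges of $F$ since $\eta$ is edge-bijective and $F$ is a partition hypergraph, so the two probabilities cannot both exceed $\tfrac12$. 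Hence distinct $A$'s produce disjoint $B_A$'s, and together with~(a) this makes $G$ a partition hypergraph. Surjectivity of $f_\cE$ is~(b), $g_\cE$ is bijective because $e_\cE$ is, and injectivity of $f_\cE$ will follow once each $B_A$ is nonempty.

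The claim $\gamma : G \xto{\bm\kappa} F$ is then immediate from \cref{eq:def.lhom.partitionHypergraph}: for $b \in B_A$, (c) gives $\Pr\set{ \gamma(b) \in g_\cE(B_A) } > 1 - \kappa_{B_A}$. For $\phi : H \xto{\bm\mu} G$, fix $a \in A \in \cE(H)$ and set $p = \Pr\set{ \phi(a) \in B_A }$. The key computation splits the average over $b$ at $B_A$:
\begin{align*}
1 - \lambda_A
  &\le \Pr\set{ \eta(a) \in e_\cE(A) } = \sum_b \Pr\set{ \phi(a) = b } \cdot \Pr\set{ \gamma(b) \in e_\cE(A) } \\
  &\le p \cdot 1 + (1 - p)(1 - \kappa_{B_A}) = 1 - (1 - p) \kappa_{B_A},
\end{align*}
where for $b \notin B_A$ the bound $\Pr\set{ \gamma(b) \in e_\cE(A) } \le 1 - \kappa_{B_A}$ comes from the definition of $B_A$ in~(c). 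Rearranging and applying~(d) yields $1 - p \le \lambda_A/\kappa_{B_A} \le \mu_A$, which is the required \ac{lhc} bound for $\phi$. It also forces $B_A \ne \emptyset$ as long as $\mu_A < 1$, closing the injectivity check.

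The main obstacle is the splitting step for $\phi$: the defining property of $B_A$ controls $\gamma$ directly but says nothing about $\phi$, so one must route through the composite $\eta = \gamma \circ \phi$, decompose the expectation at $B_A$, and bound the two halves separately. The roles of~(d) and of the hypothesis $\lambda_A < \tfrac12$ then become transparent: $\kappa_{B_A} \le \tfrac12$ enforces disjointness of the $B_A$'s via the pigeonhole argument above; $\kappa_{B_A} \ge \lambda_A/\mu_A$ is exactly the amplification needed to convert the composite error $\lambda_A$ into the per-stage error $\mu_A$; and the joint compatibility of these constraints with $\mu_A < 1$, which keeps the $B_A$'s nonempty and preserves edge-bijectivity, is possible precisely because $\lambda_A < \tfrac12$.
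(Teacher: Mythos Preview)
Your argument is correct and coincides with the paper's proof. The paper phrases the $\phi$-step as Markov's inequality applied to the random variable $V_{\phi(a)}(e_\cE(A)^c)$, whereas your splitting computation $p\cdot 1 + (1-p)(1-\kappa_{B_A})$ is exactly the standard one-line proof of that Markov bound; the disjointness and $\gamma$ parts are identical, and your explicit check that $B_A\ne\emptyset$ (hence $f_\cE$ injective) is a point the paper leaves implicit.
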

\begin{proof}
  Let $W$ and $V$ be the conditional PMFs governing the channels $\phi$ and $\gamma$, respectively.
  For any $a \in \cA$, let $A = H_a \in \cE(H)$, $b \in B_A$, $C_A = e_\cE(A)$,
  and let $f_\cE : A \mapsto B_A$.
  Thus, $f_\cE(A) = \set{ b' : V_{b'}(C_A) \ge 1 - \kappa_{B_A} }$.
  Since $\eta$ is \iac{lhc},
  \begin{align}
    \mu_A \ge \lambda_A/\kappa_{B_A}
      &\ge \kappa_{B_A}^{-1} \Pr\set{ \eta(a) \notin e_\cE(A) }
    \\&=   \kappa_{B_A}^{-1} \sum\nolimits_{b'} W_a(b') V_{b'}(C_A^c)
    \\&=   \expect\brack{ V_{\phi(a)}(C_A^c) } /  \kappa_{B_A}
    \\&\ge W_a\tup[\Big]{ V_{\phi(a)} (C_A^c) \ge \kappa_{B_A} }
    \\&=   \Pr\set{ \phi(a) \notin f_\cE(H_a) }
    \,.
  \end{align}
  Since every $\kappa_{B_A} \le 1/2$, and all $C_A$ are pairwise disjoint,
  there is only one $A$ where
  $\Pr\set{ \gamma(b) \in C_A } > 1 - \kappa_{B_A}$.
  Hence, $f_\cE$ is bijective, and $\phi : F \xto{\bm\mu} G$ is an
  edge-bijective \ac{lhc}.
  Further, let $g_\cE = e_\cE \compose f_\cE^{-1}$ be a bijection.
  Now, for every $b \in B_A$,
  \begin{align}
    1 - \kappa_{B_A}
      &< \Pr\set{ \gamma(b) \in e_\cE(A) }
    \\&= \Pr\set{ \gamma(b) \in e_\cE \compose f_\cE^{-1}(B_A) }
    \\&= \Pr\set{ \gamma(b) \in g_\cE(G_b) }
    \,.
  \end{align}
  Therefore, $\gamma : G \xto{\bm\kappa} H$ is an edge-bijective \ac{lhc}.
\end{proof}
By the following corollaries,
the channel itself is a reliable code somehow,
and decoding gains from randomness are small.

\begin{corollary}
  \label{corollary:fcode.ran.channelIsLhom}
  For any function $f : \cA \to \cB$ and channel $\phi : \cX \to \RanVars(\cY)$,
  if there exists an $(f, \phi, \bm\lambda)$-code
  and $4\bm\lambda \le \bm\kappa \le \frac 1 2$,
  then there exist partition hypergraphs $G$ and $F$,
  $\card{\cE(G)} = \card{\cE(F)} = \card{f(\cA)}$,
  and $\phi : G \xto{\bm\kappa} F$ is an edge-bijective \ac{lhc}.
\end{corollary}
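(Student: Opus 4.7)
The plan is to start from the given function code, lift it to an edge-bijective \ac{lhc} via \cref{lemma:fcodeMakesLhomChannel}, and then peel the decoder and the encoder off one at a time by applying the decomposition result \cref{lemma:lhomChannel.decompose} twice, thereby isolating $\phi$ as an edge-bijective \ac{lhc} between two freshly constructed partition hypergraphs.

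Concretely, let $\psi = \Dec \compose\, \phi \compose \Enc$. By \cref{lemma:fcodeMakesLhomChannel.codeToHom}, $\psi : H_f \xto{\bm\lambda} f(\cA)^1$ is an edge-bijective \ac{lhc}; since $\bm\lambda \le \bm\kappa/4 \le 1/8 < 1/2$, the hypothesis of \cref{lemma:lhomChannel.decompose} is satisfied. I would first apply it to the split $\psi = \Dec \compose (\phi \compose \Enc)$, taking $\phi \compose \Enc$ as the first factor with intermediate error $\mu_A = 2\lambda_A$ and $\Dec$ as the second factor with error $1/2$; the inequality $\lambda_A/\mu_A = 1/2$ is tight and the second-factor bound is met with equality. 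The lemma then returns a partition hypergraph $F$ on $\cY$ with $\card{\cE(F)} = \card{\cE(H_f)} = \card{f(\cA)}$, together with the edge-bijective \ac{lhc} $\phi \compose \Enc : H_f \xto{2\bm\lambda} F$.

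Next I would apply \cref{lemma:lhomChannel.decompose} a second time, now to $\phi \compose \Enc : H_f \xto{2\bm\lambda} F$, splitting it into $\Enc$ as the first factor and $\phi$ itself as the second. Taking the intermediate error on $\Enc$ to be $1/2$ and targeting error $\kappa_A$ on $\phi$, the required inequality reads $2\lambda_A/(1/2) = 4\lambda_A \le \kappa_A \le 1/2$, which is precisely the corollary's hypothesis; the base-error condition $2\lambda_A \le \kappa_A/2 < 1/2$ likewise holds. This delivers a partition hypergraph $G$ on $\cX$ with $\card{\cE(G)} = \card{\cE(H_f)} = \card{f(\cA)}$ and, as desired, an edge-bijective \ac{lhc} $\phi : G \xto{\bm\kappa} F$ into the same $F$ produced by the first decomposition.

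The only delicate point is parameter bookkeeping rather than any conceptual novelty: to peel off $\Dec$ and $\Enc$ without pushing $\phi$ past the $1/2$ reliability threshold required by \cref{lemma:lhomChannel.decompose}, the intermediate error has to be doubled at each of the two decomposition steps, and the factor $4$ in the hypothesis $4\bm\lambda \le \bm\kappa$ is exactly the cumulative cost of these two doublings.
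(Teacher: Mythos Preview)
Your proposal is correct and follows the same approach as the paper: invoke \cref{lemma:fcodeMakesLhomChannel.codeToHom} to turn the code into an edge-bijective \ac{lhc} $H_f \xto{\bm\lambda} f(\cA)^1$, then apply \cref{lemma:lhomChannel.decompose} twice, first peeling off $\Dec$ to obtain $\phi \compose \Enc : H_f \xto{2\bm\lambda} F$, and then peeling off $\Enc$ to obtain $\phi : G \xto{\bm\kappa} F$. Your explicit parameter bookkeeping (choosing $\mu_A = 2\lambda_A$, $\kappa_{B_A} = 1/2$ in the first step and $\mu'_A = 1/2$, $\kappa'_{B_A} = \kappa_A$ in the second) is a welcome elaboration of what the paper leaves implicit.
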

\begin{proof}
  Suppose the code is $(\Enc, \Dec)$ and denote $\psi = \phi \compose \Enc$.
  By \cref{lemma:fcodeMakesLhomChannel},
  $\Dec \compose \psi : H_f \xto{\bm\lambda} f(\cA)^1$
  is an edge-bijective \ac{lhc}.
  By \cref{lemma:lhomChannel.decompose} applied to the channel $\Dec \compose\, \psi$,
  there exists a partition hypergraph $F$ such that
  $\phi \compose \Enc : H_f \xto{2 \bm\lambda} F$ an edge-bijective \ac{lhc},
  and $\card{\cE(F)} = \card{\cE(H_f)} = \card{f(\cA)}$.
  By applying \cref{lemma:lhomChannel.decompose} a second time,
  but to the channel $\phi \compose \Enc = \psi$,
  we have that for every $\bm\kappa$, $\frac 1 2 \ge \bm\kappa \ge 4 \bm\lambda$,
  there exists a partition hypergraph $G$ such that $\phi : G \xto{\bm\kappa} F$
  is an edge-bijective \ac{lhc},
  and $\card{\cE(G)} = \card{\cE(F)} = \card{f(\cA)}$.
\end{proof}

By \cref{corollary:fcode.ran.channelIsLhom,lemma:fcodeMakesLhomChannel},
stochastic encoding and decoding can at best halve the error probability:
\begin{corollary}
  \label{corollary:detDecoder}
  For any $(f, \phi, \bm\lambda)$-code $(\Enc, \Dec)$, there exist
  deterministic functions (homomorphisms) $\Enc', \Dec'$
  such that $(\Enc', \Dec')$ is an $(f, \phi, 4\bm\lambda)$-code.
\end{corollary}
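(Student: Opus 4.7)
The strategy is to apply \cref{corollary:fcode.ran.channelIsLhom} to reduce the stochastic code to an edge-bijective \ac{lhc} structure on $\phi$ itself, and then synthesize deterministic encoder and decoder by choosing canonical representatives on each edge, as authorized by \cref{lemma:edgeMapHasHom}. Intuitively, the corollary already tells us that the randomness in $\Enc$ and $\Dec$ is dispensable: it is absorbed into the choice of which edge of $\cX$ (respectively $\cY$) each message (respectively channel output) is assigned to.

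First I would invoke \cref{corollary:fcode.ran.channelIsLhom} with $\bm\kappa = 4\bm\lambda$, restricted, if necessary, to those components where $4\lambda_c \le 1/2$; for the remaining components the bound $4\lambda_c$ is vacuous or met by any deterministic choice. This yields partition hypergraphs $G$ and $F$ with $\card{\cE(G)} = \card{\cE(F)} = \card{f(\cA)}$, together with an edge-bijective \ac{lhc} $\phi : G \xto{4\bm\lambda} F$ with edge map $\phi_\cE$. The two successive applications of \cref{lemma:lhomChannel.decompose} inside the proof of that corollary produce compatible edge bijections $\cE(H_f) \leftrightarrow \cE(F) \leftrightarrow \cE(G)$, so that each $c \in f(\cA)$ indexes an edge $B_c \in \cE(G)$ and an edge $C_c \in \cE(F)$ with $\phi_\cE(B_c) = C_c$.

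Next I would build $\Enc'$ and $\Dec'$ via \cref{lemma:edgeMapHasHom}. Choose $\Enc' : H_f \to G$ to be a homomorphism with edge map $A_c \mapsto B_c$, so that $\Enc'(a) \in B_{f(a)}$ for every $a \in \cA$. Choose $\Dec' : F \to f(\cA)^1$ to be a homomorphism with edge map $C_c \mapsto \set{c}$, so that $\Dec'(y) = c$ whenever $y \in C_c$; extend $\Dec'$ arbitrarily on $\cY \setminus \cV(F)$. For any $a$ with $c = f(a)$, property \eqref{eq:def.lhom.partitionHypergraph} then gives
\begin{equation*}
  \Pr\set{ \Dec' \compose \phi \compose \Enc'(a) = c } \ge \Pr\set{ \phi(\Enc'(a)) \in C_c } \ge 1 - 4 \lambda_c,
\end{equation*}
so $(\Enc', \Dec')$ is an $(f, \phi, 4\bm\lambda)$-code.

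The main obstacle is purely bookkeeping: tracking how the two edge bijections produced by \cref{corollary:fcode.ran.channelIsLhom} compose with the canonical labeling $\cE(H_f) \leftrightarrow f(\cA)$, so that encoder and decoder agree on which $c$ each edge of $G$ and $F$ represents. Once these labels line up, the proof reduces to the observation that \cref{lemma:edgeMapHasHom} freely converts any prescribed edge map between partition hypergraphs into a deterministic homomorphism.
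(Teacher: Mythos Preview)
Your proposal is correct and follows essentially the same route as the paper, which simply cites \cref{corollary:fcode.ran.channelIsLhom} and \cref{lemma:fcodeMakesLhomChannel} without further detail; you unpack the final construction step via \cref{lemma:edgeMapHasHom} directly, whereas the paper would invoke the packaged \cref{lemma:fcodeMakesLhomChannel.homToCode}, but the underlying argument is identical. One small caveat: your handling of components with $4\lambda_c > 1/2$ (``vacuous or met by any deterministic choice'') is not quite right when $1/2 < 4\lambda_c < 1$, since the encoder and decoder must be single functions working for all components simultaneously---but the paper's one-line proof glosses over this same boundary case, so your treatment matches its level of rigor.
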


\subsection{Bipartite encoders}
\label{sec:bipartite}

By \cref{corollary:detDecoder}, deterministic encoders are sufficient
if doubling the error probability is acceptable.
But for \ac{id}, stochastic encoding does indeed improve the
rates, when the receiver knows one message.
This holds because the messages must be encoded independently,
i.e., the encoder is a bipartite channel $\Enc = \Enc_1 \otimes \Enc_2$.
Still, we can replace the encoder by a deterministic one,
by~\cref{lemma:lhomChannel.decompose,corollary:detDecoder,corollary:fcode.ran.channelIsLhom},
but the new encoder may not be bipartite.
We construct bipartite encoders from codes that assume the “other” message
were known to the receiver.

\begin{corollary}
  \label{corollary:lhomChannel.bipartite.semiDet}
  Consider $H, F, \phi, \bm\mu$ as in \cref{lemma:lhomChannel.decompose} such that
  $\cA = \cA_1 \times \cA_2$, $\cB = \cB_1 \times \cB_2$, and
  $\phi = \phi_1 \otimes \phi_2 : \cA \to \RanVars(\cB)$.
  There exist partition hypergraphs $G_1, G_2$, where
  $\cV(G_1) \subseteq \cA_1 \times \cB_2$ and $\cV(G_2) \subseteq \cB_1 \times \cA_2$,
  such that
  $\id_{\cA_1} \otimes\, \phi_2 : H \xto {\bm\mu} G_1$
  and $\phi_1 \otimes \id_{\cA_2} : H \xto {\bm\mu} G_2$
  edge-bijective \acp{lhc}
  and $\card{\cE(G_1)} = \card{\cE(G_2)} = \card{\cE(F)}$.
\end{corollary}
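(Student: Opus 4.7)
The plan is to apply \cref{lemma:lhomChannel.decompose} twice, once for each of the two natural factorizations of the product channel $\phi = \phi_1 \otimes \phi_2$ through an intermediate space: once through $\cA_1 \times \cB_2$ to produce $G_1$, and once through $\cB_1 \times \cA_2$ to produce $G_2$.

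For $G_1$, I would write $\phi = (\phi_1 \otimes \id_{\cB_2}) \compose (\id_{\cA_1} \otimes \phi_2)$ and match this to the decomposition $\eta = \gamma \compose \phi$ of \cref{lemma:lhomChannel.decompose}, with $\gamma = \phi_1 \otimes \id_{\cB_2}$ and inner channel $\id_{\cA_1} \otimes \phi_2 : \cA \to \RanVars(\cA_1 \times \cB_2)$. Letting $f_\cE$ denote the edge map of $\phi : H \xto{\bm\mu} F$, the decomposition lemma then yields a partition hypergraph $G_1$ whose edges, indexed by $A \in \cE(H)$, have the form $B_A = \set{(a_1, b_2) \in \cA_1 \times \cB_2 : \Pr((\phi_1(a_1), b_2) \in f_\cE(A)) > 1 - \kappa_{B_A}}$, together with the assertion that $\id_{\cA_1} \otimes \phi_2 : H \to G_1$ is an edge-bijective \ac{lhc}. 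By construction $\cV(G_1) = \bigcup_A B_A \subseteq \cA_1 \times \cB_2$, and edge-bijectivity of $\phi$ combined with that of the decomposition map $A \mapsto B_A$ forces $\card{\cE(G_1)} = \card{\cE(H)} = \card{\cE(F)}$. The construction of $G_2$ is obtained by the symmetric factorization $\phi = (\id_{\cB_1} \otimes \phi_2) \compose (\phi_1 \otimes \id_{\cA_2})$, producing $G_2$ with $\cV(G_2) \subseteq \cB_1 \times \cA_2$, edge-bijective \ac{lhc} $\phi_1 \otimes \id_{\cA_2} : H \to G_2$, and $\card{\cE(G_2)} = \card{\cE(F)}$.

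The main obstacle is bookkeeping of error parameters rather than any genuinely new argument: \cref{lemma:lhomChannel.decompose} imposes the constraint $\lambda_A/\mu'_A \le \kappa_{B_A} \le \tfrac 1 2$, so the inner-channel parameter $\bm\mu'$ is strictly at least twice the input composition parameter. To justify the reuse of the symbol $\bm\mu$ on both sides of the conclusion, one runs each of the two applications of the lemma with $\kappa_{B_A} = \tfrac 1 2$ and reinterprets the $\bm\mu$ in the conclusion as the common output parameter of the two applications (equivalently, the $\bm\mu$ in the hypothesis should be rescaled by a factor of two). Once this is done, the remaining claims of the corollary follow directly from the lemma.
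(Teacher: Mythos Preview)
Your factorization idea and the plan to invoke \cref{lemma:lhomChannel.decompose} twice are exactly right, and the paper leaves the proof implicit precisely because it is meant to be this direct. However, you apply the lemma one level too late, and that is the source of your bookkeeping problem. The phrase ``$H, F, \phi, \bm\mu$ as in \cref{lemma:lhomChannel.decompose}'' means the \emph{entire} setup of that lemma is in force: there is an outer channel $\gamma$ and a composite $\eta = \gamma \compose \phi : H \xto{\bm\lambda} F$ with $\lambda_A < \tfrac12$, and $\bm\mu$ is the parameter the lemma produces for the inner factor $\phi$. In particular $F$ is the target of $\eta$, not of $\phi$; your line ``the edge map of $\phi : H \xto{\bm\mu} F$'' is a misreading, and applying the lemma to $\phi$ would moreover require $\mu_A < \tfrac12$, which is not guaranteed.

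With the full setup available, do not decompose $\phi$ and re-apply the lemma to it. Instead, rewrite $\eta$ itself as
\[
  \eta \;=\; \bigl[\gamma \compose (\phi_1 \otimes \id_{\cB_2})\bigr] \compose \bigl[\id_{\cA_1} \otimes \phi_2\bigr]
\]
and apply \cref{lemma:lhomChannel.decompose} to \emph{this} splitting of the same $\eta : H \xto{\bm\lambda} F$. The constraint $\lambda_A/\mu_A \le \kappa_{B_A} \le \tfrac12$ is identical to the one in the original application, so the \emph{same} $\bm\mu$ is admissible for the new inner channel $\id_{\cA_1} \otimes \phi_2$, yielding $G_1$ with $\cV(G_1) \subseteq \cA_1 \times \cB_2$ and $\card{\cE(G_1)} = \card{\cE(H)} = \card{\cE(F)}$ by edge-bijectivity of $e_\cE$. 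The symmetric splitting through $\cB_1 \times \cA_2$ gives $G_2$. No rescaling or reinterpretation of $\bm\mu$ is needed, and the appearance of $F$ (rather than some intermediate hypergraph) in the edge-count equality is now explained.
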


\Cref{corollary:lhomChannel.bipartite.semiDet} shows that if the concatenation
of encoder and channel is split into parallel parts (e.g., multiple
encoders and parallel channels), then the optimal encoder of one argument
does not depend on the encoders of the other arguments. In particular, for
noiseless channels, it does not matter to User 1 if User 2 is separate from or
identical to the receiver.

\begin{example}[($K$)-\acl{id}]
  For ($K$)-\ac{id} when both $m$ and $\cS$ are sent over parallel noisy or
  noiseless channels, it has been unclear if we must transmit exactly
  one argument, we can compress both independently,
  or whether there is a tradeoff between compression of the two.
  \Cref{corollary:lhomChannel.bipartite.semiDet}
  shows the optimal compression of both arguments $m$ and $\cS$ is independent.
\end{example}

The next lemma is key to constructing bipartite encoders from encoders
for partial identity channels.
Note that it requires a hypergraph $F$ to have a rectangular vertex set.

\begin{lemma}
  \label{lemma:bipartite.lhomChannel.changeIdBranch}
  For all partition hypergraphs $I, H, G, F$ with equal number of edges and
  \begin{align*}
    \cV(I) &= \cX_1 \times \cA_2 \,,
           &\cV(G) &= \cA_1 \times \cX_2
           \,, \\
    \cV(H) &= \cA_1 \times \cA_2 \,,
           &\cV(F) &= \cX_1 \times \cX_2
           \,,
  \end{align*}
  and for every edge-bijective \ac{lhc} $\id_{\cA_1} \otimes\, \phi : H \xto{\bm\lambda} G$,
  \begin{gather}
    \id_{\cX_1} \otimes\, \phi : I \xto{\bm\lambda} F
  \end{gather}
  is an edge-bijective \ac{lhc}.
\end{lemma}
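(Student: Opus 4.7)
The plan is to transfer the \ac{lhc} condition from $\id_{\cA_1} \otimes \phi$ to $\id_{\cX_1} \otimes \phi$ by isolating the behavior of $\phi$ on the second coordinate, which is shared by both channels, and observing that neither identity branch interacts with $\phi$.

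First, I would fix notation: let $f_\cE : \cE(H) \to \cE(G)$ denote the edge map of the given edge-bijective \ac{lhc}, and set up a bijection between $\cE(I)$ and $\cE(H)$, respectively $\cE(F)$ and $\cE(G)$ (using the assumed equality of the edge cardinalities), so that corresponding edges $I_i \leftrightarrow H_i$ and $F_j \leftrightarrow G_j$ line up, and define $f'_\cE : \cE(I) \to \cE(F)$ by $f'_\cE(I_i) \coloneqq F_{f_\cE(H_i)}$, which is automatically bijective. It remains to verify \cref{eq:def.lhom.partitionHypergraph} for the partition hypergraph $I$.

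Next, pick any $(x_1, a_2) \in I_i$. Since $(\id_{\cX_1} \otimes \phi)(x_1, a_2) = (x_1, \phi(a_2))$, and the first coordinate is deterministic, the condition reduces to
\begin{gather*}
  \Pr\set{ \phi(a_2) \in F_{f_\cE(H_i)}|_{x_1} } \ge 1 - \lambda_{I_i},
\end{gather*}
where $S|_{x_1} \coloneqq \set{y : (x_1, y) \in S}$ denotes the $x_1$-slice. On the other hand, applying the \ac{lhc} hypothesis to any vertex $(a_1, a_2) \in H_i$ gives $\Pr\set{\phi(a_2) \in G_{f_\cE(H_i)}|_{a_1}} \ge 1 - \lambda_{H_i}$. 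The main step is then to argue that the slice $F_{f_\cE(H_i)}|_{x_1}$ coincides (or at least covers) the slice $G_{f_\cE(H_i)}|_{a_1}$, so that the probability bound carries over verbatim, with $\lambda_{I_i}=\lambda_{H_i}$.

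The main obstacle, and where the ``rectangular vertex set'' remark becomes crucial, is justifying this identification of slices. The cleanest way is to observe that since $\id_{\cA_1}$ is noiseless and preserves the first coordinate with probability one, the edges $H_i$ and $G_{f_\cE(H_i)}$ constrained by the \ac{lhc} condition must be cylinder sets over $\cA_1$, i.e., of the form $\cA_1 \times T_i$ and $\cA_1 \times U_i$ for some $T_i \subseteq \cA_2$, $U_i \subseteq \cX_2$; the same structural argument then applies to $I_i = \cX_1 \times T_i$ and $F_{f_\cE(H_i)} = \cX_1 \times U_i$, yielding the desired slice identity. An equivalent, more abstract route is to apply \cref{lemma:lhomChannel.decompose} to the decomposition of $\id_{\cA_1} \otimes \phi$, thereby extracting $\phi$ as an edge-bijective \ac{lhc} between the canonical hypergraphs on $\cA_2$ and $\cX_2$ (with edges $T_i$, $U_i$) whose edge structure depends on neither $\cA_1$ nor $\cX_1$, and then recombining with $\id_{\cX_1}$ to obtain the claim. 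The bookkeeping of the edge cardinalities and bijectivity of $f'_\cE$ is then routine.
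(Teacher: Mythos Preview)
Your argument has a genuine gap at the step where you assert that the edges of $H$ and $G$ (and then $I$, $F$) must be cylinder sets over the first coordinate. The hypotheses impose no such structure: $H, G, I, F$ are \emph{arbitrary} partition hypergraphs on the stated product vertex sets, and the \ac{lhc} condition on $\id_{\cA_1} \otimes \phi$ does not force cylinder shape. For instance, take $\cA_1 = \cA_2 = \cX_2 = \{0,1\}$, $\phi = \id_{\cA_2}$, and let $H = G$ have edges $\{(0,0),(1,1)\}$ and $\{(0,1),(1,0)\}$; then $\id_{\cA_1}\otimes\phi$ is trivially an edge-bijective \ac{lhc} with $\bm\lambda = 0$, yet neither edge is a cylinder. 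Your alternative route via \cref{lemma:lhomChannel.decompose} does not rescue this: that lemma decomposes a \emph{sequential} composition $\gamma \circ \phi$, not a parallel one, so it cannot ``extract'' $\phi$ from $\id_{\cA_1} \otimes \phi$; and even if it could, the resulting hypergraphs on $\cA_2$ and $\cX_2$ would be dictated by the decomposition, not by the given, arbitrary $I$ and $F$.

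The paper proceeds differently and never assumes any product structure on edges. It picks bijections $h_\cE : \cE(I) \to \cE(H)$ and $s_\cE : \cE(F) \to \cE(G)$ (consistently with the given edge map $t_\cE$ of $\id_{\cA_1}\otimes\phi$) and realizes them by homomorphisms $h : I \to H$ and $s : F \to G$ that preserve the second coordinate, i.e., $h(x_1,a_2) = (h_1(x_1,a_2), a_2)$ and similarly for $s$. The key observation is then that $(\id_{\cA_1}\otimes\phi)\circ h$ and $s \circ (\id_{\cX_1}\otimes\phi)$ share the same random second coordinate $\phi(a_2)$, so the event of landing in the correct $G$-edge is literally the same event for both channels; the \ac{lhc} bound thus transfers from $(\id_{\cA_1}\otimes\phi)\circ h$ to $s\circ(\id_{\cX_1}\otimes\phi)$, and one finally strips off $s$ via \cref{lemma:detPrefix_lhomSuffix} (using $r$ with $r_\cE = s_\cE^{-1}$) to obtain the \ac{lhc} property of $\id_{\cX_1}\otimes\phi : I \xto{\bm\lambda} F$ itself. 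The slice-matching you were aiming for is thus replaced by an event-matching after sandwiching with coordinate-preserving homomorphisms.
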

\begin{proof}
  Let $t_\cE$ be the edge map of $\id_{\cA_1} \otimes \phi$ and
  denote by $F_{x_1,x_2}$ the edge of $F$ that contains $(x_1,x_2)$.
  There exist bijections $g_\cE, r_\cE, s_\cE, r_\cE$ such that
  \begin{gather}
    \cE(I) \xto[h_\cE]{} \cE(H)
      \xto[t_\cE]{} \cE(G)
      \xto[r_\cE]{} \cE(F)
      \xto[g_\cE]{} \cE(I)
    \,,
  \end{gather}
  $r_\cE = s_\cE^{-1}$,
  and $t_\cE \compose h_\cE \compose g_\cE (F_{x_1x_2}) = s_\cE(F_{x_1x_2})$.
  Let $h : I \to H$, $h_1 : (x_1, a_2) \mapsto \tup{ h_1(x_1,a_2), a_2) }$
  be a homomorphism with edge map $h_\cE$, i.e., $h(x_1,a_2) \in h_\cE(I_{x_1a_2})$.
  Define similarly $g : F \to I$, $s : F \to G$ and $r : G \to F$.
  Thus, if
  \begin{align}
    (\id_{\cA_1} \otimes \phi) \compose h (x_1, a_2)
      &=  (h_1(x_1,a_2), \phi(a_2))
    \nonumber\\
      &\in G_{h_1(x_1,a_2), \phi(a_2)}
    \\&= t_\cE \compose h_\cE(I_{x_1,a_2})
    \\&= s_\cE (F_{x_1,\phi(a_2)})
    \,,
  \shortintertext{then}
    s \compose (\id_{\cX_1} \otimes \phi) (x_1, a_2)
      &=  s(x_1,\phi(a_2))
    \nonumber\\
      &\in
    s_\cE(F_{x_1,\phi(a_2)})
    \\&= G_{h_1(x_1,a_2), \phi(a_2)}
    \,.
  \end{align}
  Hence, since $\id_{\cA_1} \otimes \phi : H \xto{\bm\lambda}$
  is an edge-bijective \ac{lhc},
  $(\id_{\cA_1} \otimes \phi) \compose h : I \xto{\bm\lambda} G$
  and $s \compose (\id_{\cX_1} \otimes \phi) : I \xto{\bm\lambda} G$
  are edge-bijective \acp{lhc} as well, and the same holds
  by \cref{lemma:detPrefix_lhomSuffix} for
  $
    (r \compose s) \compose (\id_{\cX_1} \otimes\, \phi) : I \xto{\bm\lambda} F
    $
    and $
    \id_{\cX_1} \otimes\, \phi\:\: : I \xto{\bm\lambda} F
    $.
\end{proof}

\subsection{\Acl{id}}
\label{sec:identification}

With the help of \cref{lemma:lhomChannel.decompose}
and \cref{lemma:bipartite.lhomChannel.changeIdBranch},
we can make surprising predictions
about \ac{id} with a bipartite encoder, i.e., $\Enc = \eta_1 \otimes
\eta_2$, which are not obvious from codes for the usual \ac{id} setting
where one $m' \in \cM$ is known to the receiver, i.e., we have a
channel is $\phi = \psi \otimes \id_\cM$.
In general, for $\phi : \cX_1 \times \cX_2 \to \RanVars(\cY)$, we can write
\begin{align}
  \Dec \compose\, \phi \compose \Enc
    &= \Dec \compose\, \phi
      \compose (\id_{\cX_1} \otimes \eta_2)
      \compose (\eta_1 \otimes \id_\cM)
    \,.
\end{align}
By \cref{lemma:lhomChannel.decompose} and
\cref{corollary:lhomChannel.bipartite.semiDet},
there exist partition hypergraphs $G_1, G_2, F$
such that we have the locally homomorphic chain
\begin{gather}
  H_f \xto[\eta_1 \otimes \id_\cM]{\bm\alpha}
  G_1 \xto[\id_{\cX_1} \otimes \eta_2]{\bm\beta}
  F \xto[\Dec \compose\, \phi]{\bm\mu}
  \set{0,1}^1
  \,,
  \label{eq:idCode.decompose}
\end{gather}
where $\bm\alpha \bm\beta \bm\mu = \bm\lambda$, if $\bm\lambda < 1/2$.
Furthermore, by \cref{lemma:bipartite.lhomChannel.changeIdBranch},
if $\id_\cM \otimes \eta_2 : H_f \xto{\bm\beta} G_2$ is an edge-bijective \ac{lhc},
then $\id_{\cX_1} \otimes \eta_2 : G_1 \xto{\bm\beta} F$ is an edge-bijective \ac{lhc}.
Hence, if we can construct encoders
\begin{align}
  \eta_1 \otimes \id_\cM &: H_f \xto{\bm\alpha} G_1
  \,,
                         &
  \id_\cM \otimes \eta_2 &: H_f \xto{\bm\beta} G_2
  \,,
\end{align}
we obtain an encoder $\Enc' = \eta_1 \otimes \eta_2$
such that
\begin{gather}
  \Dec \compose \phi \compose \Enc' : H_f \xto{\bm\alpha + \bm\beta + \bm\mu} \set{0,1}^1
\end{gather}
is an edge-bijective \ac{lhc}, and thus by
\cref{lemma:fcodeMakesLhomChannel.homToCode},
we have the following theorem.
In general, such a decoder exists by
\cref{lemma:fcodeMakesLhomChannel.homToCode} if $\phi : F \xto{\bm\mu} D$
is an edge-bijective \ac{lhc}, for some partition hypergraph $D$ with $\card{\cE(D)} = 2$.

\begin{theorem}
  \label{thm:idCode.fromIndependentEnc}
  For all hypergraphs $H = H_{f_{\textsf{ID}}}, G_1 = I, G_2 = G, F$
  as in \cref{lemma:bipartite.lhomChannel.changeIdBranch}
  and any $D$ as above with $\card{\cE(D)} = 2$, if
  \begin{align}
    \Enc_1 \otimes \id_\cM &: H \xto{\bm\alpha} G_1
    \,, \\
    \id_\cM \otimes \Enc_2 &: H \xto{\bm\beta} G_2
    \,, \\
    \phi &: F \xto{\mu} D
  \end{align}
  are edge-bijective \acp{lhc},
  then there exists a edge-bijective homomorphism $\Dec : D \to \set{0,1}^1$
  such that $(\Enc_1 \otimes \Enc_2, \Dec)$ is an
  $(M, \phi, \bm\alpha + \bm\beta + \bm\mu)$-\ac{id}-code.
\end{theorem}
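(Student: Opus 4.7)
The plan is to chain the three hypothesized edge-bijective \acp{lhc} into a single edge-bijective \ac{lhc} from $H_{f_{\textsf{ID}}}$ to $\set{0,1}^1$ and then invoke \cref{lemma:fcodeMakesLhomChannel.homToCode} to realize it as an \ac{id}-code. First, I would reformulate the second hypothesis via \cref{lemma:bipartite.lhomChannel.changeIdBranch}: since $\id_\cM \otimes \Enc_2 : H \xto{\bm\beta} G_2$ is edge-bijective, that lemma upgrades it to an edge-bijective \ac{lhc} $\id_{\cX_1} \otimes \Enc_2 : G_1 \xto{\bm\beta} F$. Combined with the remaining hypotheses, this produces the chain
\begin{gather*}
H \xto[\Enc_1 \otimes \id_\cM]{\bm\alpha}
G_1 \xto[\id_{\cX_1} \otimes \Enc_2]{\bm\beta}
F \xto[\phi]{\bm\mu}
D \,,
\end{gather*}
whose stochastic composition equals $\phi \compose (\Enc_1 \otimes \Enc_2)$.

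Next, I would verify that the sequential composition of two edge-bijective \acp{lhc} $\psi_1 : G \xto{\bm\epsilon} G'$ and $\psi_2 : G' \xto{\bm\delta} G''$ with edge maps $e_1, e_2$ is itself an edge-bijective \ac{lhc} $\psi_2 \compose \psi_1 : G \to G''$ with edge map $e_2 \compose e_1$. A union bound on the events $\set{\psi_1(a) \notin e_1(A)}$ and, conditionally on $\psi_1(a) = b \in e_1(A)$, $\set{\psi_2(b) \notin e_2(e_1(A))}$, bounds the error at $a \in A$ by $\epsilon_A + \delta_{e_1(A)}$; edge-bijectivity of $e_2 \compose e_1$ is immediate. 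Iterating this observation twice along the chain shows that $\phi \compose (\Enc_1 \otimes \Enc_2) : H \to D$ is an edge-bijective \ac{lhc} with error parameter $\bm\alpha + \bm\beta + \bm\mu$, the sum being taken componentwise along the composed edge bijections.

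Finally, since $\card{\cE(D)} = 2 = \card{\cE(\set{0,1}^1)}$, \cref{lemma:edgeMapHasHom} produces an edge-bijective homomorphism $D \to \set{0,1}^1$, and \cref{lemma:homBijectiveEquiv} lets me adjust it so that the overall edge map of the composed chain agrees with the edge map of $f_{\textsf{ID}} : H_{f_{\textsf{ID}}} \to \set{0,1}^1$. Calling the adjusted homomorphism $\Dec$, the channel $\Dec \compose \phi \compose (\Enc_1 \otimes \Enc_2) : H_{f_{\textsf{ID}}} \xto{\bm\alpha + \bm\beta + \bm\mu} \set{0,1}^1$ is an edge-bijective \ac{lhc}, so \cref{lemma:fcodeMakesLhomChannel.homToCode} yields the desired $(M, \phi, \bm\alpha + \bm\beta + \bm\mu)$-\ac{id}-code with encoder $\Enc_1 \otimes \Enc_2$. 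The main obstacle I anticipate is the bookkeeping of edge bijections through three successive composition stages: one must ensure the final two-edge decoder labels the two edges of $D$ consistently with $\ind{m = m'}$, which is precisely what \cref{lemma:homBijectiveEquiv} grants.
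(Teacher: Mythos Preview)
Your proposal is correct and follows essentially the same route as the paper: apply \cref{lemma:bipartite.lhomChannel.changeIdBranch} to lift $\id_\cM \otimes \Enc_2$ to $\id_{\cX_1} \otimes \Enc_2 : G_1 \xto{\bm\beta} F$, chain the three edge-bijective \acp{lhc} with an additive error via a union bound, and then invoke \cref{lemma:fcodeMakesLhomChannel.homToCode} (together with \cref{lemma:edgeMapHasHom,lemma:homBijectiveEquiv}) to produce the decoder $\Dec : D \to \set{0,1}^1$. The paper's argument, given in the paragraph preceding the theorem, is identical in structure; you simply make the composition-of-\acp{lhc} step explicit where the paper leaves it implicit.
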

\begin{remark}
This shows that knowing $F$, $G_1$, and $G_2$, one can design both encoders
as if the other message were known to the receiver.
The challenge is, however, finding a suitable $F$.
In the next example there exists a suitable $F$, based on a traditional
deterministic \ac{id} code where one message is known at the receiver.
It is not clear whether a suitable $F$ exists for a discrete memoryless
multiple-access channel
and super-exponential $\card\cM$.
\end{remark}

\begin{example}[Deterministic \ac{id}]

To illustrate \cref{thm:idCode.fromIndependentEnc},
consider \ac{id} with a deterministic encoder
over a memoryless channel $\phi^n = \BSC_\gamma^n \otimes \BSC_\gamma^n$, where
$\Pr\set{ y^n = \BSC_\gamma^n(x^n) } = \prod_{i=1}^n \Pr\set{ y_i = \BSC_\gamma(x_i) }$,
$\Pr\set{ x = \BSC_\gamma(x) } = 1 - \gamma \ne \frac 1 2$.
If one message in $\cM = \set{ 1,\dots,M }$
is known to the receiver,
a reliable codebook is any set $\set{ x^n_m : m \in \cM }$
where the minimal Hamming distance
$\min_{m' \ne m} d_H(x^n_m, x^n_{m'}) = n \delta$,
for arbitrary $0 < \delta < 1$~%
\cite{jaja1985identification,salariseddighPeregBocheDeppe2022det_ID_powerConstraints_tit}.
In the case where both messages are sent over the channel $\phi^n$,
consider the following. For two words $x^n, \bar x^n$
with distance $n \delta = d_H(x^n, \bar x^n) = \card{\set{ i : x_i \ne \bar x_i }}$,
let $(Y^n, \bar Y^n) = \phi^n(x^n, \bar x^n)$. Then,
\begin{align}
  n \theta_\delta &= \expect \brack{d_H(Y^n, \bar Y^n)}
  \onecol\tab = \sum\nolimits_{i=1}^n \expect \ind{ Y_i \ne \bar Y_i }
  \\&= \sum_{i \::\: x_i = \bar x_i}
         2 \gamma(1-\gamma)
     + \sum_{i \::\: x_i \ne \bar x_i}
         \tup{\gamma^2 + (1 - \gamma)^2}
  \\&= n \tup{1 - \delta} \cdot 2 \gamma (1 - \gamma)
    + n \delta \cdot \tup{ 1 - 2\gamma(1 - \gamma) }
  \\&= n \beta + n \delta \tup{ 1 - 2 \beta }
    \,,
\end{align}
where $\beta = 2\gamma(1-\gamma)$ is the probability that exactly one
letter of $x_i$ and $\bar x_i$ is flipped, and $\beta \le \frac 1 2$.
Consider now
\begin{align}
  0 < &\:\epsilon
  < \frac { \theta_\delta - \theta_0 } { \theta_\delta + \theta_0 }
  = \frac { \delta (1-2\beta) } { 2\beta + \delta (1-2\beta) }
  \le 1 - 2 \beta
  \,, \\
  \cX_\delta &= \set{ (x^n, \bar x^n) : d_H(x^n, \bar x^n) \ge n \delta }
  \,, \\
  \cX_0 &= \set{ (x^n, \bar x^n) : x^n = \bar x^n }
  \,, \\
  T_{\delta,\epsilon}
  &= \set{ (y^n, \bar y^n) :
    \card{ d_H(y^n, \bar y^n) - n \theta_\delta } < n \epsilon \theta_\delta
  }
  \,,
\end{align}
and a codebook $\cC = \set{ x_m^n : m \in \cM } \subseteq \cX_\delta$
as described in~\cite{jaja1985identification}.
We define the hypergraphs
\begin{align}
  H &\coloneqq
  H_{f_{\textsf{ID}}}
  = \tup[\big]{ \cM^2,
  \set[\big]{ \set{ (m,m) }, \set{ (m, \bar m) }_{m \ne \bar m} }
  }
  \,, \\
  G_1 &\coloneqq \tup[\big]{ \cX^n \times \cM,
    \set[\big]{
      \set{ (x^n_m, m) },
      \set{ (x^n_m, \bar m) }_{m \ne \bar m}
    }
  }
  \,, \\
  G_2 &\coloneqq \tup[\big]{ \cM \times \cX^n,
    \set[\big]{
      \set{ (m, x^n_m) },
      \set{ (m, x^n_{\bar m}) }_{m \ne \bar m}
    }
  }
  \,, \\
  F &\coloneqq \tup{ \cX^{2n}, \set{ \cX_0\,,\, \cX_\delta } }
  \,, \\
  C &\coloneqq F \cap \cC^2 \coloneqq \tup{ \cC^2, \set{ \cX_0 \cap \cC^2 \,,\, \cX_\delta \cap \cC^2 } }
  \,, \\
  D &\coloneqq \tup{ \cX^{2n}, \set{ T_{0, \epsilon}\,,\, T_{\delta, \epsilon} } }
  \,.
\end{align}
By Chernoff's bound~\cite[Theorems 4.4 and 4.5]{mitzenmacherUpfal2007probabilityComputing}
$
  \Pr\set{ (Y^n, \bar Y^n) \notin T_{\delta,\epsilon} }
  \le 2 e^{ - n \epsilon^2 \theta_\delta / 2 }
  $%
  ,
which converges to zero for increasing $n$, for all $0 \le \delta \le 1$.
Hence, for $\Enc_1 = \Enc_2 : m \mapsto x^n_m$,
$\Enc_1 \otimes \id_\cM : H \to G_1$, $\id_\cM \otimes \Enc_2 : H \to G_2$,
and $\Enc_1 \otimes \Enc_2 : H_f \to C$
are edge-bijective homomorphisms, and whenever $0 < \bm\lambda < 1$,
$\phi^n : F \xto{\bm\lambda} D$ is an edge-bijective \ac{lhc}
with edge map $f_\cE : \cX_\alpha \mapsto T_{\alpha,\epsilon}$, $\alpha = 0,\delta$.
This confirms \cref{lemma:bipartite.lhomChannel.changeIdBranch},
and by \cref{thm:idCode.fromIndependentEnc}, $(\Enc_1 \otimes \Enc_2, \Dec)$
is an $(M, \phi^n, \bm\lambda)$-\ac{id}-code.
Note that \cref{lemma:bipartite.lhomChannel.changeIdBranch} assumes
that $F$ is a partition hypergraph $F$ with a rectangular vertex set.
Our $F$ is no partition hypergraph, and $\cX_0 \bigcup \cX_\delta$ is not
rectangular. However, $C$, which is $F$ restricted to the codebook $\cC$,
satsifies both.
\end{example}

\cref{fig:idMacBipartiteRates} visualizes the known achievable rates
for our \ac{id} model and the capacity regions for traditional \ac{id} and
transmission over over $\BSC_\gamma^n \otimes \BSC_\gamma^n$,
when $\gamma = 0.03$ and only determinstic encoding is allowed.

\begin{figure}
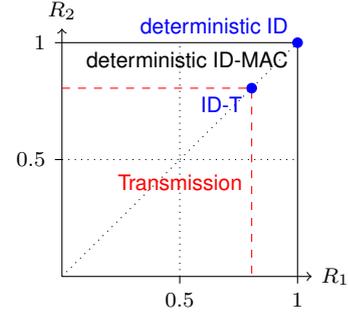

  \centering
  \tikzIdMacBipartiteRates[
    x=0.35\linewidth,y=0.35\linewidth,
    rateBound/di/node/.style = above left,
    rateBound/di-t/node/.style = below left,
    rateBound/di-mac/node/.style = below left,
    rateBound/transmission/node/.style = left,
  ]{0.03}
  \caption{\label{fig:idMacBipartiteRates}
    \ac{id} and transmission rates achievable over two parallel binary symmetric
    channels with crossover probability $\gamma = 0.03$, using deterministic encoders.
    \textsf{\acs{id}} corresponds to our $f_{\textsf{ID}}$, where the two
    messages to be compared are sent over the channel, whereas \textsf{ID-MAC}
    corresponds to $f_{\textsf{ID-MAC}} = f_{\textsf{ID}}^{(1)} \otimes f_{\textsf{ID}}^{(2)}$,
    where $f_{ID}^{(i)} : \cM_i^2 \to \set{0,1}$, are \ac{id} functions,
    i.e., two simultaneous \acp{id} are performed, each with one message known
    to the receiver.
    For deterministic \textsf{\acs{id}}, the diagonal up to the marked points is
    achievable. For deterministic \textsf{\ac{id}-MAC}, the regions below the lines are
    achievable~\cite[Sec.~4.A]{rosenbergerIbrahimDeppeFerrara2023di_mac_isit}.
    For transmission over a MAC, the red dashed line is the boundary of the
    capacity region~\cite{elgamalKim2011network_it}.
    The intersection point \textsf{\acs{id}-T} marks the rates achievable for
    $f_{\textsf{ID}}$ if we restrict the codes to transmission codes.
  }
\end{figure}

\section{Conclusion}

We developed the notion of \iacf{lhc}
and proved approximate equivalence between those and codes for computing
functions. Further, we derived decomposition properties of \ac{lhc}
which used to analyze and construct codes where two messages must be encoded
independently. This lead to new results for \acf{id} and $K$-\ac{id},
which were illustrated at the example of \ac{id} with deterministic encoders.
Despite surprising improvements compared to naive code constructions,
in many cases it remains unknown how optimal our constructions are.
To assess this, one needs to enhance understanding of
the input hypergraphs to \acp{lhc}. In particular, it is still unclear
if the vertex set of the target hypergraph of a bipartite encoder must be
strictly rectangular, and how to optimize these, which are also the input
hypergraphs to the actual channel, such that the vertex sets are
rectangular and the channels are locally homomorphic and edge-bijective.
Answering these questions would fully explain the tradeoff between the
rates of the two messages.
A first application is presented in~\cite{rosenbergerBocheCabreraFitzek2024consensus_globecom}.

\section*{Acknowledgement}

The authors acknowledge the financial support by the Federal Ministry of
Education and Research of Germany (BMBF) in the program of “Souverän. Digital.
Vernetzt.” Joint project 6G-life, project identification numbers:
16KISK001K, 16KISK002.
H. Boche was further supported in part by the BMBF within the
national initiative on Post Shannon Communication (NewCom) under Grant
16KIS1003K. C. Deppe also received support from
the BMBF within the national initiative on Post Shannon Communication
(NewCom) under Grant 16KIS1005.
J. Rosenberger and C. Deppe were also supported
by the German Research Foundation (DFG) in the German-Israeli Project Cooperation (DIP), under grant
number 509917421, and C. Deppe received support from the DFG in the project DE1915/2-1.
Moreover, C. Deppe was supported by the Bavarian Ministry of Economoic Affairs,
Regional Development and Energy in the project 6G and Quantum Technologies (6G QT).
J. Cabrera received additional support from the German Research Foundation (DFG,
Deutsche Forschungsgemeinschaft) as part of Germany’s Excellence Strategy—EXC
2050/1—Cluster of Excellence ``Centre for Tactile Internet with Human-in-the
loop'' (CeTI) of Technische Universitat Dresden under Project ID: 390696704.

\fxwarning[author=JR]{other funding}

\printbibliography

\newpage
\listoffixmes

\end{document}